\author{%
 \Name{Taha Entesari} \Email{tentesa1@jhu.edu}\\
 \addr \normalfont Department of Electrical and Computer Engineering, Johns Hopkins University, USA
 \AND
 \Name{Mahyar Fazlyab} \Email{mahyarfazlyab@jhu.edu}\\
 \addr \normalfont Department of Electrical and Computer Engineering, Johns Hopkins University, USA
}
\begin{document}

\title[Automated Reachability Analysis of Neural Network-Controlled Systems]{Automated Reachability Analysis of Neural Network-Controlled Systems via Adaptive Polytopes}
\maketitle

\begin{abstract}%
Over-approximating the reachable sets of dynamical systems is a fundamental problem in safety verification and robust control synthesis. The representation of these sets is a key factor that affects the computational complexity and the approximation error.  In this paper, we develop a new approach for over-approximating the reachable sets of neural network dynamical systems using adaptive template polytopes.  We use the singular value decomposition of linear layers along with the shape of the activation functions to adapt the geometry of the polytopes at each time step to the geometry of the true reachable sets. 
We then propose a branch-and-bound method to compute accurate over-approximations of the reachable sets by the inferred templates. We illustrate the utility of the proposed approach in the reachability analysis of linear systems driven by neural network controllers.
\end{abstract}

\begin{keywords}%
  Template Polytopes, Branch and Bound, Neural Network Verification, Reachability Analysis 
\end{keywords}

\section{Introduction}

As the use of neural networks has expanded into safety-critical applications such as autonomous systems and automated healthcare, there is a growing need to develop efficient and scalable methods to rigorously verify neural networks against input uncertainties.  The canonical problem is to verify that for a bounded set of inputs, the reachable set of a trained model does not intersect with an unsafe set. From an optimization perspective, this problem can be formulated as a constraint satisfaction feasibility problem, where the goal is to either verify the constraint or find a counter-example.



Going beyond machine learning, neural networks
also arise as function approximators in feedback control. Compared to open-loop settings, verification of neural networks in closed-loop systems is a more challenging problem, as it requires an explicit characterization of the reachable sets at each iteration.
Computation of reachable sets for dynamical systems is a fundamental problem that arises in, for example, safety verification and robust control synthesis--see \cite{althoff2021set} for an overview.
Formally, given the dynamical system $x^{k+1} = F(x^k) \quad x^0 \in \mathcal{X}^0$,
%
where $\mathcal{X}^0$ is a bounded set of initial conditions, the reachable set at time $k+1$ is defined as 
\begin{align}\label{eqn:reachabeSetDefinition}
    \mathcal{X}^{k+1} = F(\mathcal{X}^k) = \{F(x) \mid x \in \mathcal{X}^k\},
\end{align}
Since it is generally difficult to compute the reachable sets exactly, these sets are often over-approximated iteratively by a sequence of \emph{template sets} $(\bar{\mathcal{X}}_k)_{k\geq 0}$: starting with  $\bar{\mathcal{X}}^0 = \mathcal{X}^0$, we over-approximate the image of $\bar{\mathcal{X}^k}$ under $F$ using a \emph{set propagation algorithm} to ensure that $F(\bar {\mathcal{X}}^k) \subseteq \bar {\mathcal{X}}^{k+1}$ for $k=0,1,\cdots$. If the over-approximated sets do not intersect with a set of unsafe states (e.g., obstacles), then safety can be guaranteed. However, the  over-approximation error can quickly accumulate over time (known as the wrapping effect \cite{neumaier1993wrapping}), leading to overly conservative bounds for long time horizons. 
This challenge becomes even more pronounced when neural networks are involved in the feedback loop (e.g., when $F$ is itself a neural network approximation of an ODE).

The choice of template sets can have a crucial effect on the accuracy of computations. Indeed, any potential mismatch between the shape of the template set and the actual reachable set can lead to conservative bounds (shape mismatch error). 
To minimize this error, it is essential to use \emph{dynamic} template sets that can adapt to the geometry of the reachable sets based on the structure of $F$. 
Furthermore, the method by which we propagate the sets through $F$ can incur conservatism due to the underlying relaxations (propagation error). 
This error can be mitigated by using less conservative relaxations in the propagation method and/or by partitioning the input set.

\paragraph{Our Contributions} In this paper, we propose a novel method for reachability analysis of  discrete-time affine systems in feedback with $\mathrm{ReLU}$\footnote{Rectified Linear Unit} neural network controllers. Using bounded polyhedra to represent the template sets, we propose a method that dynamically adapts the geometry of the template sets to that of reachable sets based on the structure of the closed-loop map. Thus, this approach eliminates the manual selection of the template directions, making the procedure fully automated. Based on the chosen template directions, we then compute tight\footnote{We globally solve the corresponding non-convex problems within an arbitrary accuracy.} polyhedral over-approximations of the reachable sets using efficient branch-and-bound (BnB) algorithms. Our method is modular in that it can incorporate any bound propagation method for neural networks. Our code is available at \url{https://github.com/o4lc/AutomatedReach.git}.

\subsection{Related Work}
\paragraph{Open-loop Verification} 
Verifying piecewise linear networks can be cast as an MILP with the binary variables describing the $\mathrm{ReLU}$ neurons \cite{cheng2017maximum,tjeng2017evaluating,dutta2018output,lomuscio2017approach,fischetti2018deep}. These problems can be solved globally using generic BnB methods, in which the optimization problem is recursively divided into sub-problems by branching the binary variables. The optimal value of each sub-problem is then bounded using convex (linear) relaxations, leading to provable bounds on the optimal objective value of the original problem. However, generic BnB solvers do not exploit the underlying structure of the problem and hence, may be inefficient. As such, state-of-the-art methods for complete  verification develop customized BnB methods in which efficient bound propagation methods are used \cite{wang2021beta, bunel2020lagrangian, de2021improved, xu2020fast, kouvaros2021towards, ferrari2022complete}. 
%
Branching is done either on binary variables that describe the $\mathrm{ReLU}$ activations or on the input set.
\cite{vincent2021reachable} branches using the activation pattern of the $\mathrm{ReLU}$ nonlinearities and uses polyhedral sections in the input space, making the neural network an affine function in each polyhedron. 


\paragraph{Closed-loop Verification}
When the choice of set representation is template polyhedra, the problem of reachability analysis reduces to finding the maximum output along the normal directions \cite{dutta2018learning}. Some approaches to solve this problem approximate the overall network using polynomials \cite{dutta2019reachability, huang2019reachnn}. Other methods abstract the input-output relationship of neurons using intervals \cite{claviere2021safety}, star sets \cite{tran2019star}, Taylor models \cite{ivanov2021verisig}, polynomial zonotopes \cite{kochdumper2022open}, hybrid zonotopes \cite{zhang2022reachability}, and integration of Taylor models with zonotopes \cite{schilling2022verification}.
%
\cite{sidrane2022overt} explore the extension to nonlinear systems by abstracting nonlinear functions with a set of optimally tight piecewise linear bounds.
\cite{hu2020reach} over-approximates the reachable sets using semidefinite relaxations, which can be used to propagate ellipsoids in addition to polytopes. \cite{everett2021reachability} use a looser convex relaxation \cite{zhang2018efficient} but bridge the gap by partitioning in the state space. \cite{entesari2022reachlipbnb} propose a BnB framework using Lipschitz bounds. \cite{chen2022one} characterize the conditions for the set propagation algorithm under which unrolling the dynamics yields less conservative bounds on the reachable sets. \cite{rober2022backward} propose a framework for closed-loop verification based on backward reachability analysis.   


\paragraph{Set Representation} Axis-aligned hyperrectangles or oriented hyperrectangles, acquired by running principal component analysis (PCA) on samples trajectories \cite{stursberg2003efficient}, can be convenient choices as the number of facets grows only linearly with the ambient dimension. However, they can become too conservative.
On the other hand, there exist more complex methods like \cite{bogomolov2017counterexample} that solve convex problems to derive template polytope directions or \cite{ben2012reachability} that use a first-order Taylor approximation of $F$ (assuming $F$ is differentiable) to propose dynamic directions. 


\subsection{Notation}
For a real number $r$, $(r)_+=\max(r,0)$ is the non-negative part of that number. 
For any vector $x \in \mathbb{R}^n$, $\mathrm{ReLU}(x) = [(x_1)_+, \cdots, (x_n)_+]^\top$ where $x_i$ is the $i$th element of $x$.
$I_n$ denotes the $n$ by $n$ identity matrix. $0_{n \times m}$ denotes the $n$ by $m$ matrix of all zeros.
We use $\mathrm{Poly}(A, d) = \{ x \in \mathbb{R}^n \mid Ax \leq d\}$ to denote polyhedrons, in which we denote the $i$-th row of $A$ by $a_i^\top$. This is called the $\mathcal{H}-$representation of polyhedra as it uses the intersection of half-spaces to represent the set. If such a set is bounded, we call it a polytope.
Given matrices $A_i \in \mathbb{R}^{n \times n_i}$ and vectors $a_i \in \mathbb{R}^{n}$, $[A_1, \cdots, A_m, a_1, \cdots, a_k]$ denotes the horizontal concatenation of the elements into a single matrix. For a given function $f:\mathbb{R}^n \rightarrow \mathbb{R}^n$, $f^{(k)}$ denotes the $k$-the composition of $f$ with itself. 


\section{Problem Statement}

%
Consider a discrete-time autonomous system
\begin{equation}\label{eqn:linearSystem}
    x^{k + 1} = F(x^k) = Ax^k + B f(x^k) + e,
\end{equation}
with state $x^k \in \mathbb{R}^n$, control input $u^k \in \mathbb{R}^m$, and system dynamics
given by $A \in \mathbb{R}^{n \times n}$, $B \in \mathbb{R}^{n \times m}$, and $e \in \mathbb{R}^{n}$ is a known exogenous constant. We assume that the control policy is given by $f(x^k)$, where $f:\mathbb{R}^{n_0} \rightarrow \mathbb{R}^{n_L}$ ($n_0=n, n_L = m$) is a fully-connected ReLU network as follows
\begin{align}\label{eqn:nnDefinition}
    \begin{split}
        x_0 =x, \quad&
        x_{i + 1} = \mathrm{ReLU}(W_i x_i + b_i), \quad i=0, \cdots, L-1,\quad f(x) 
        = W_{L} x_{L} + b_{L},
    \end{split}
\end{align} 
where $x_i \in \mathbb{R}^{n_i}$, $W_i \in \mathbb{R}^{n_{i + 1} \times n_i}$, and $b_i \in \mathbb{R}^{n_{i + 1}}$.
%
%
%
We assume that the initial state $x^0$ belongs to a bounded box $\mathcal{X}^0$ of initial conditions, $\mathcal{X}^0 = \{x \in \mathbb{R}^n \mid \underline{x} \leq x \leq \bar{x}\}$, where $\underline{x},\bar{x} \in \mathbb{R}^{n}$. Starting from this set, the reachable set at time $k+1$ is defined as in \eqref{eqn:reachabeSetDefinition}.
For piecewise linear dynamics and polyhedral initial sets, these sets are non-convex. 

In this paper, we use bounded polyhedra to parameterize the template sets,  $\bar{\mathcal{X}}^{k}= \mathrm{Poly}(C^{k},d^k)$,
%
%
where the rows of the template matrix $C^k \in \mathbb{R}^{m_k \times n}$ are the normal directions 
that define the facets of the  polyhedron and $d^k \in \mathbb{R}^{m_k}$ is their offset. We use the superscript $k$ for the template matrices to emphasize that the template may not be the same for all the polyhedra. Starting from $\bar{\mathcal{X}}^0={\mathcal{X}}^0$, our goal is to compute the pair $(C^k,d^k)$ 
such that $\bar{\mathcal{X}}^k$ over-approximates ${\mathcal{X}}^k$ as closely as possible.


%

\section{Proposed Method} \label{section: adpative template polytope}
In this section, we present the details of the proposed method. We first cast the closed-loop map as an equivalent $\mathrm{ReLU}$ network ($\S$\ref{subsec::Neural Network Representation of the Closed-loop Map}). Then  assuming that the template matrices $C^k$ are all given, we elaborate on computing the offset vectors $d^k$ ($\S$\ref{sec:bnbDeepPoly}). We will finally discuss how to choose the template matrices $C^k$, the choice of which will not depend on $d^k$ ($\S$\ref{subsec::Adaptive Template Polytopes}).

\subsection{Neural Network Representation of the Closed-loop Map} \label{subsec::Neural Network Representation of the Closed-loop Map}



%
We start by observing that the closed-loop map is essentially a neural network with a skip connection as shown in Fig. \ref{fig:nns}-(a). State-of-the-art bound propagation methods for neural network verification operate on sequential neural networks, i.e., neural networks without any skip connections. To take advantage of existing bound propagation methods with minimal intervention, we convert the closed-loop map into an equivalent sequential network without skip connections.
%
To be precise, note that we can write $x^k = \mathrm{ReLU}(x^k) - \mathrm{ReLU}(-x^k)$. This implies that $x^k = (\mathrm{ReLU} \circ \cdots \circ \mathrm{ReLU})(x^k) - (\mathrm{ReLU} \circ \cdots \circ \mathrm{ReLU})(-x^k)$, i.e., $x^k$ can  be routed through a sequential $\mathrm{ReLU}$ network. By concatenating this path on top of  $f$, we obtain the following equivalent representation of $F$,
\begin{align}\label{eqn:modifiedNn}
    \begin{split}
        y_0 &= \begin{bmatrix}
        I_{n_0}\\ -I_{n_0} \\ I_{n_0}
        \end{bmatrix}x^k + \begin{bmatrix}
        0_{{n_0} \times 1}\\ 0_{{n_0} \times 1} \\0_{{n_0} \times 1}
        \end{bmatrix}, \quad y_i = \mathrm{ReLU}(\hat{y}_i), \quad i = 1, \cdots, L,\\
        \hat{y}_{i + 1} &= \begin{bmatrix}
        I_{n_0} & 0_{{n_0} \times n_0} & 0_{{n_0} \times n_i}\\
        0_{{n_0} \times n_0} & I_{n_0} & 0_{{n_0} \times n_i}\\
        0_{n_{i + 1} \times n_0} & 0_{n_{i + 1} \times n_0} & W_i
        \end{bmatrix}y_i + 
        \begin{bmatrix}
        0_{n_0 \times 1}\\ 0_{n_0 \times 1} \\ b_i
        \end{bmatrix}, \quad i=0, \cdots, L - 1,\\ 
        y_{L + 1} &= \begin{bmatrix}
        I_{n_0} & 0_{{n_0} \times n_0} & 0_{{n_0} \times n_L}\\
        0_{{n_0} \times n_0} & I_{n_0} & 0_{{n_0} \times n_L}\\
        0_{n_{L + 1} \times n_0} & 0_{n_{L + 1} \times n_0} & W_L
        \end{bmatrix}y_L + 
        \begin{bmatrix}
        0_{n_0 \times 1}\\ 0_{n_0 \times 1} \\ b_L
        \end{bmatrix}, x^{k + 1} = \begin{bmatrix}
        A & -A & B
        \end{bmatrix} y_{L + 1} + e.
    \end{split}
\end{align}
\begin{figure}[t]
    \centering
    \includegraphics[width=0.9\textwidth]{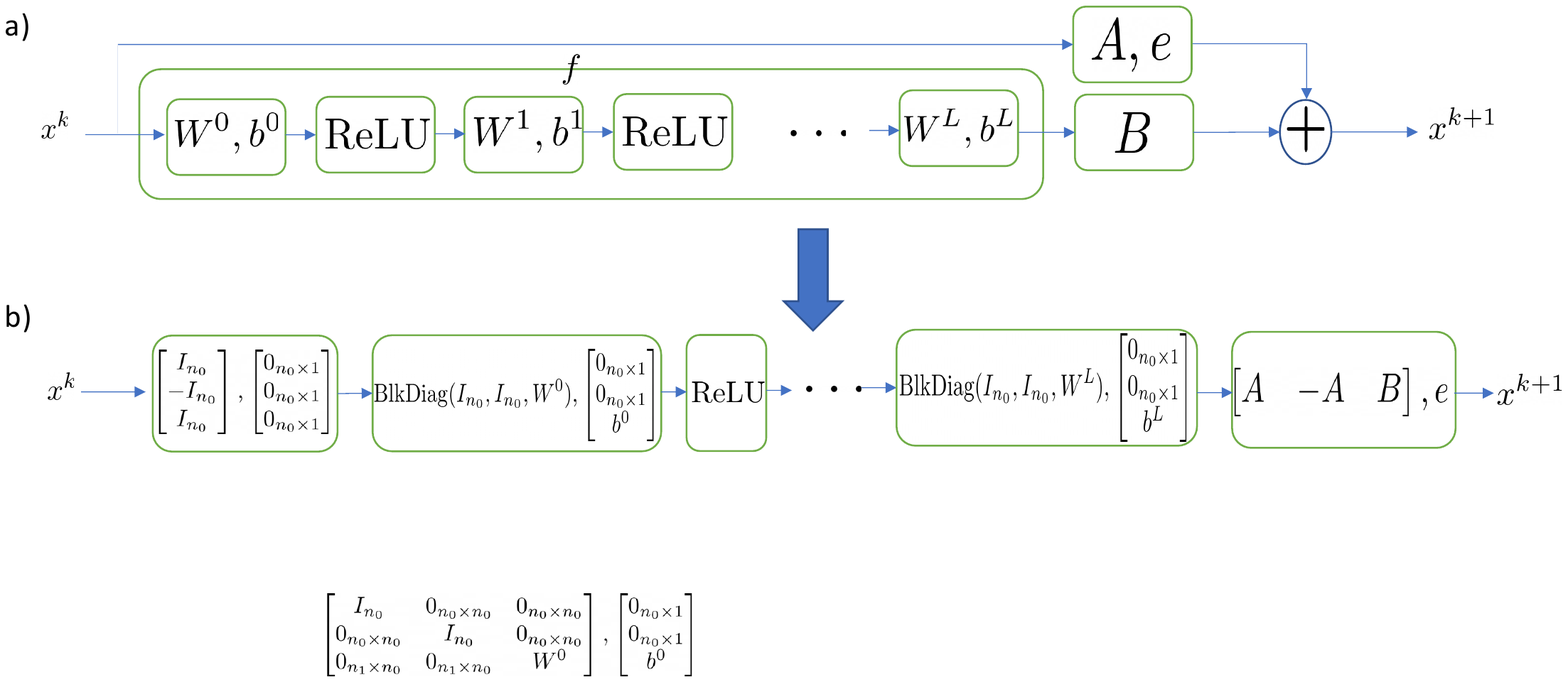}
    \caption{Closed-loop system block diagrams. The first matrix in each block represents the weights matrix of that linear layer, while the second matrix represents the bias of that layer. a) The original system as proposed by \eqref{eqn:linearSystem}. b) Conversion of the system dynamics into an equivalent $\mathrm{ReLU}$ network, as explained by \eqref{eqn:modifiedNn}.}
    \label{fig:nns}
\end{figure}
We use $F_{eq}$ to denote the neural network in \eqref{eqn:modifiedNn} (see Fig. \ref{fig:nns}-(b)). 
In summary, we can rewrite the closed-loop system \eqref{eqn:linearSystem} equivalently as
\begin{align}\label{eqn:equivalentFormulation}
    x^{k + 1} = Ax^k + Bf(x^k) + e = F_{eq}(x^k). 
\end{align}
Formulating the closed-loop system in the form of \eqref{eqn:modifiedNn} is also useful in that it eliminates the need to compute the  Minkowski sum of the two polytopes enclosing $A\bar{\mathcal{X}}^k + e$ and $Bf(\bar{\mathcal{X}}^k)$. 
Not only has the process of finding the Minkowski sum of two polytopes proven to be challenging \cite{althoff2021set}, this would amount to the unnecessary enlargement of the approximated reachable set $\mathcal{\bar{X}}^{k + 1}$, resulting in less accurate overapproximations of the actual reachable sets. We will subsequently work with the equivalent representation \eqref{eqn:equivalentFormulation}.

\subsection{Solving the Optimization Problem Using Branch and Bound}\label{sec:bnbDeepPoly}

Suppose that the polytope $\bar{\mathcal{X}}^k$ has been computed and that the template matrix $C^{k+1}$ is given. To over-approximate $F(\bar{\mathcal{X}}^k)$ by $\bar{\mathcal{X}}^{k+1} = \mathrm{Poly}(C^{k+1},d^{k+1})$, the offset vector $d^{k+1}$ of the minimal volume $\bar{\mathcal{X}}^{k+1}$ enclosing $F(\bar {\mathcal{X}}^k)$ must satisfy
\begin{align} \label{eq: computation of effset vector}
     d_i^{k+1} = \sup \{ {c_i^{k+1}}^\top F(x) \mid x \in \bar{\mathcal{X}}^  {k}\}
     =\sup \{ {c_i^{k+1}}^\top F_{eq}(x) \mid x \in \bar{\mathcal{X}}^{k}\} = 
     \quad i=1,\cdots,m_{k+1}.
\end{align}
where the second equality follows from the equivalence in \eqref{eqn:equivalentFormulation}, i.e., $F(x)=F_{eq}^k(x)$ for all $x \in \mathbb
R^{n_x}$.
Computation of the maximal values in \eqref{eq: computation of effset vector} involves solving an MILP with the binary variables describing the $\mathrm{ReLU}$ neurons \cite{tjeng2017evaluating}. 
Generic MILP solvers may not efficiently utilize the underlying problem structure. Therefore, custom BnB methods aim to exploit highly specialized bound propagation methods in the bounding stage. However, these bound propagation techniques could prove inefficient for general convex polyhedra input sets, such as the polytope $\bar{\mathcal{X}}^k$, as they require solving linear programs (LPs) to compute the bounds. Furthermore, general convex polytopes do not lend themselves to efficient partitioning. 

To overcome these obstacles, we propose an end-to-end reachability analysis on the unrolled dynamics, following a similar approach to \cite{chen2022one}. Observing that the reachable set at time $k+1$ can be written as $\mathcal{X}^{k+1} = \{F^{(k+1)}(x) \mid x \in \mathcal{X}^0\}$, we  can directly over-approximate $\mathcal{X}^{k+1}$ by $\bar{\mathcal{X}}^{k+1} = \mathrm{Poly}(C^{k+1},d^{k+1})$, where $d^{k+1}$ is now obtained by solving
\begin{align} \label{eq: computation of effset vector unrolled dyn}
    d_i^{k+1} \!=\!  \sup \{ {c_i^{k+1}}^\top F^{(k+1)}(x) \!\mid \!  x \in {\mathcal{X}}^{0}\} \!=\! \sup \{ {c_i^{k+1}}^\top F_{eq}^{(k + 1)}(x) \!\mid \! x \in {\mathcal{X}}^{0}\} \ i=1,\cdots,m_{k+1}.
\end{align}
%
Thus we have converted the unrolled dynamics $F^{(k+1)}$, which is essentially a neural network with skip connections, to an equivalent neural network $F_{eq}^{(k+1)}$ without any skip connections.

Based on the assumption that $\mathcal{X}^0$ is a hyperrectangle, we propose to solve each of the optimization problems in \eqref{eq: computation of effset vector unrolled dyn} using a BnB method based on recursive partitioning of ${\mathcal{X}}^0$, and bounding the optimal value $d_i^{k+1}$ efficiently over the partition. To be precise, given any sub-rectangle $\mathcal{X} \subset \mathcal{X}^0$, the bounding subroutine produces lower $\underline{d}_i({\mathcal{X}})$ and upper bounds $\overline{d}_i({\mathcal{X}})$ such that $\underline{d}_i({\mathcal{X}}) \leq \sup \{ {c_i^{k+1}}^\top F_{eq}^{(k + 1)}(x) \mid x \in {\mathcal{X}}\} \leq \overline{d}_i({\mathcal{X}}).$
%
%
Now, given a rectangular partition of the input set $\mathcal{X}^0$ and the corresponding bounds at each iteration of the BnB algorithm, we can update the lower (respectively, upper) bound on $d_i^{k+1}$ by taking the minimum of the lower (respectively, upper) bounds over the partition of $\mathcal{X}^0$. For the next iteration, the partition is refined non-uniformly based on a criterion, and those sub-rectangles that cannot contain the global solution are pruned. Overall, the algorithm produces a sequence of non-decreasing lower and non-increasing upper bounds on the objective value, and under mild conditions, it terminates in a finite number of iterations with a certificate of $\epsilon$-suboptimality (see \cite{boyd2007branch} for an overview of BnB methods).
%
%
%

The above framework in the context of the problem presented in this paper has been developed recently in \cite{entesari2022reachlipbnb}, in which the bounds over the partition are computed by utilizing upper bounds on the Lipschitz constant of the objective function (here ${c_i^{k+1}}^\top F_{eq}^{(k+1)}$). 
Although any bound propagation method can be incorporated into our framework (on account of the conversion in \eqref{eq: computation of effset vector unrolled dyn}), in this paper, we use the fast and scalable method of DeepPoly \cite{singh2019abstract} within the branching strategy of \cite{entesari2022reachlipbnb}. DeepPoly can handle larger networks and consequently, longer time horizons in our case. For details of the branching strategy, see \cite{entesari2022reachlipbnb}.

\begin{remark}[Zonotope Initial Set] \normalfont
    Throughout the paper, we focus on hyper-rectangle input sets $\mathcal{X}^0$. However, it is fairly simple to adapt our method to zonotope initial sets $\mathcal{X}^0 = \lbrace x \mid x = Gz, \|z\|_{\infty}\leq 1\}$. To do so,
    we substitute $x=Gz$ in \eqref{eqn:linearSystem} and absorb $G$ into the first linear layer of $f$, i.e., use $W^0G$ in lieu of $W^0$ in \eqref{eqn:nnDefinition}. We then apply the BnB method on this modified network and the input set $\mathcal{Z} = \lbrace z \mid \|z\|_{\infty}\leq 1\rbrace$.
\end{remark}

\subsection{Adaptive Template Polytopes}
\label{subsec::Adaptive Template Polytopes}

As discussed previously, the choice of the template sets can have a crucial effect on the accuracy of computations. The advantage of using fixed template polyhedra is that geometric operations such as union or intersection can be performed more efficiently \cite{althoff2021set}. However, choosing a flexible template a priori that can capture the shape of the reachable sets is challenging. 
%
%
In this subsection, we propose a method  that can adapt the geometry of the template polyhedra to the shape of the reachable sets. We outline the building blocks of the method for a single $\mathrm{ReLU}$ plus affine layer. Extension to a full neural network is straightforward and is provided in Algorithm \ref{alg:adaptivePolytopeDirections}.


\paragraph{Affine Layers} Consider an affine layer $ y = W x + b$, where $W \in \mathbb{R}^{n_1 \times n_0},b \in \mathbb{R}^{n_1}$, and suppose $x \in \mathcal{X} = \mathrm{Poly}(C,d)$,   $C \in \mathbb{R}^{m \times n_0}$.
The image of $\mathcal{X}$ under the affine layer is given by 
$$\mathcal{Y} = W\mathcal{X} + b = \{ y \in \mathbb{R}^{n_1} \mid y = W x +b, Cx \leq d\}.$$
In principle, we can find an exact $\mathcal{H}$-representation of $\mathcal{Y}$ using quantifier elimination methods such as Fourier-Motzkin elimination or Ferrante and Rackoff's method \cite{bradley2007calculus}. However, these methods typically have doubly exponential complexity. For example, using Ferrante and Rackoff's method, the complexity is $O(2^{2^{pn_0}})$ for some fixed constant $p$.
Here we propose a heuristic based on singular value decomposition of $W$ to compute an efficient while good approximation of the shape of the output polytope $\mathcal{Y}$. 
%

Assuming that $W \in \mathbb{R}^{n_1 \times n_0}$ has rank $r \leq \min(n_1,n_0)$, its singular value decomposition can be written as $W = U\Sigma V^\top =  \sum_{i = 1}^{r} \sigma_i u_i v_i^\top$, where $\sigma_i$, $u_i$ and $v_i$ are the $i^{th}$ largest singular value, and the corresponding left and right singular vectors, respectively. 
We also note that the range space of $W$ is given by $\mathrm{Span}\lbrace u_1, \cdots, u_r \rbrace$ and the null space is given by $\mathrm{Span}\lbrace v_{r + 1}, \cdots, v_{n_0} \rbrace$:
\[
Wx = \sum_{i = 1}^{r} (\sigma_i  v_i^\top x) u_i, \quad \text{and} \quad  W v_j = \sum_{i = 1}^{r} \sigma_i u_i (v_i^\top v_j) = 0 \quad j=r+1,\cdots,n_0. 
\]
We distinguish between two cases: 
\paragraph{Fat Weight Matrix ($n_1 \leq n_0$):} 
Consider the system of equations $Wx + b = y$. Solving for $x$, with the knowledge that it is solvable, yields $x = W^\dagger (y - b) + F\zeta$, where the columns of $F \in \mathbb{R}^{n_0\times (n_0 - r)}$ are the vectors $v_{r + 1}, \cdots, v_{n_0}$, i.e., they span the null space of $W$, and $\zeta \in \mathbb{R}^{(n_0 - r)}$ is a free variable. This can be viewed as $x = \begin{bmatrix} 
W^\dagger& F
\end{bmatrix} 
\begin{bmatrix}
y\\
\zeta
\end{bmatrix} - W^\dagger b$.
Given $C\begin{bmatrix}
W^\dagger& F
\end{bmatrix} 
\begin{bmatrix}
y\\
\zeta
\end{bmatrix} - CW^\dagger b \leq d$,
%
we are interested in bounding $y$. This is equivalent to projecting the polytope defined in the $(y, \zeta)$ space to the $y$ space and bounding it.
%
To do so, since $\zeta$ is a free variable, we absorb it into the constant term $CW^\dagger y \leq d + CW^\dagger b - CF\zeta = d^\prime$.
As a result, we propose to use the template matrix $\hat{C} = CW^\dagger$ and leave $d'$ unspecified. 
    

\begin{figure}[t!]
    \centering
    \includegraphics[width=.3\textwidth]{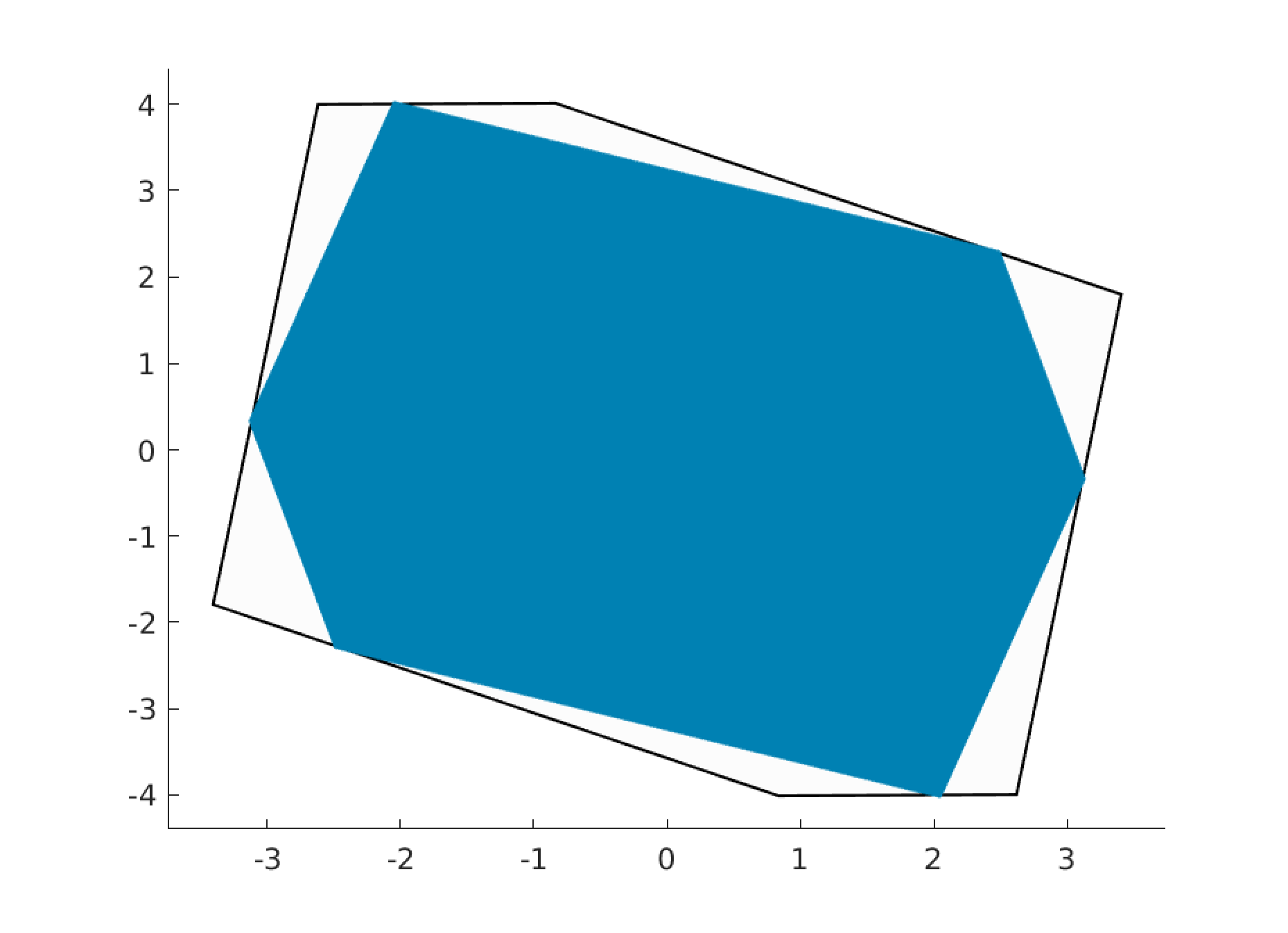}
    \includegraphics[width=.3\textwidth]{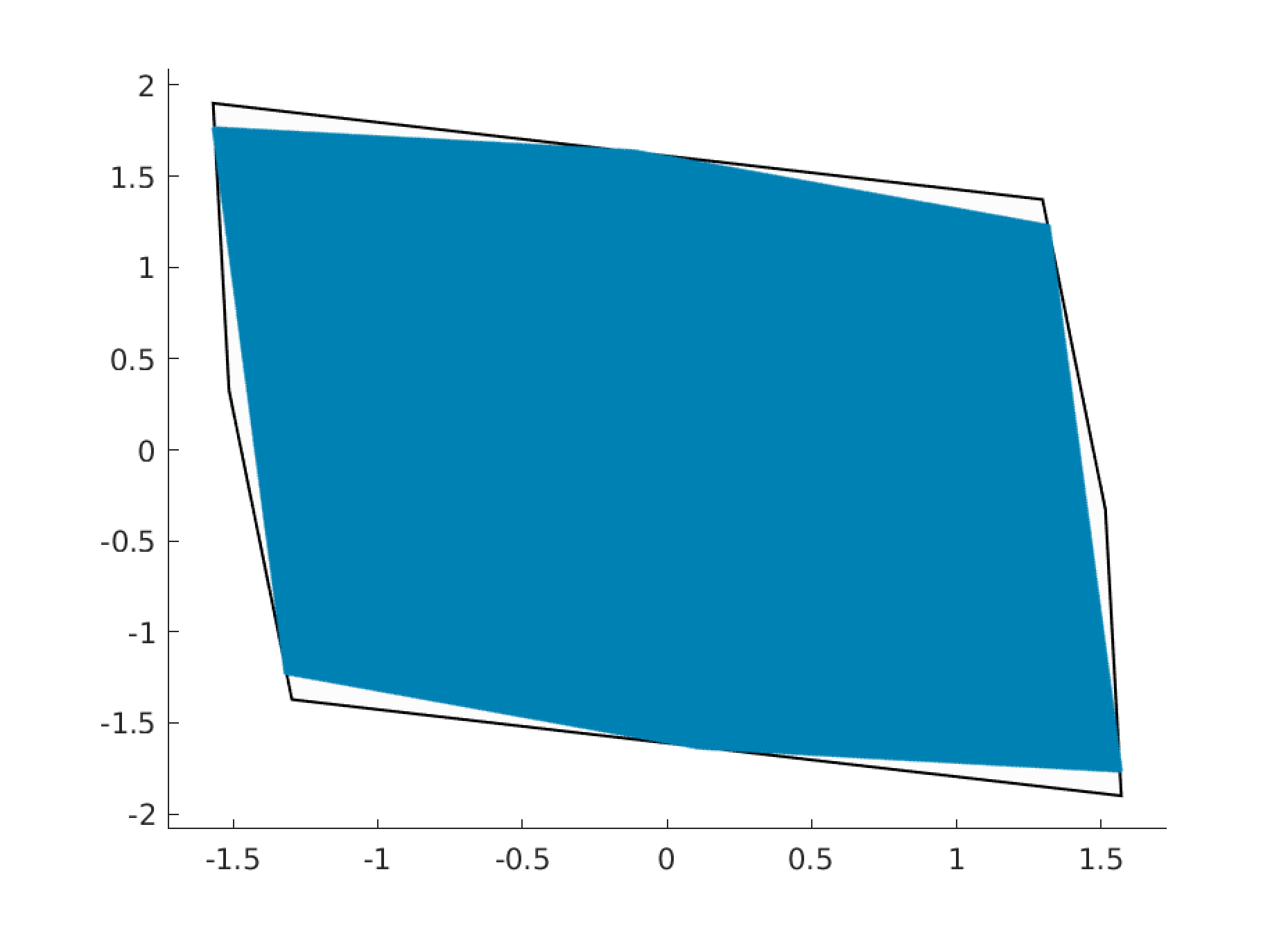}
    \includegraphics[width=.3\textwidth]{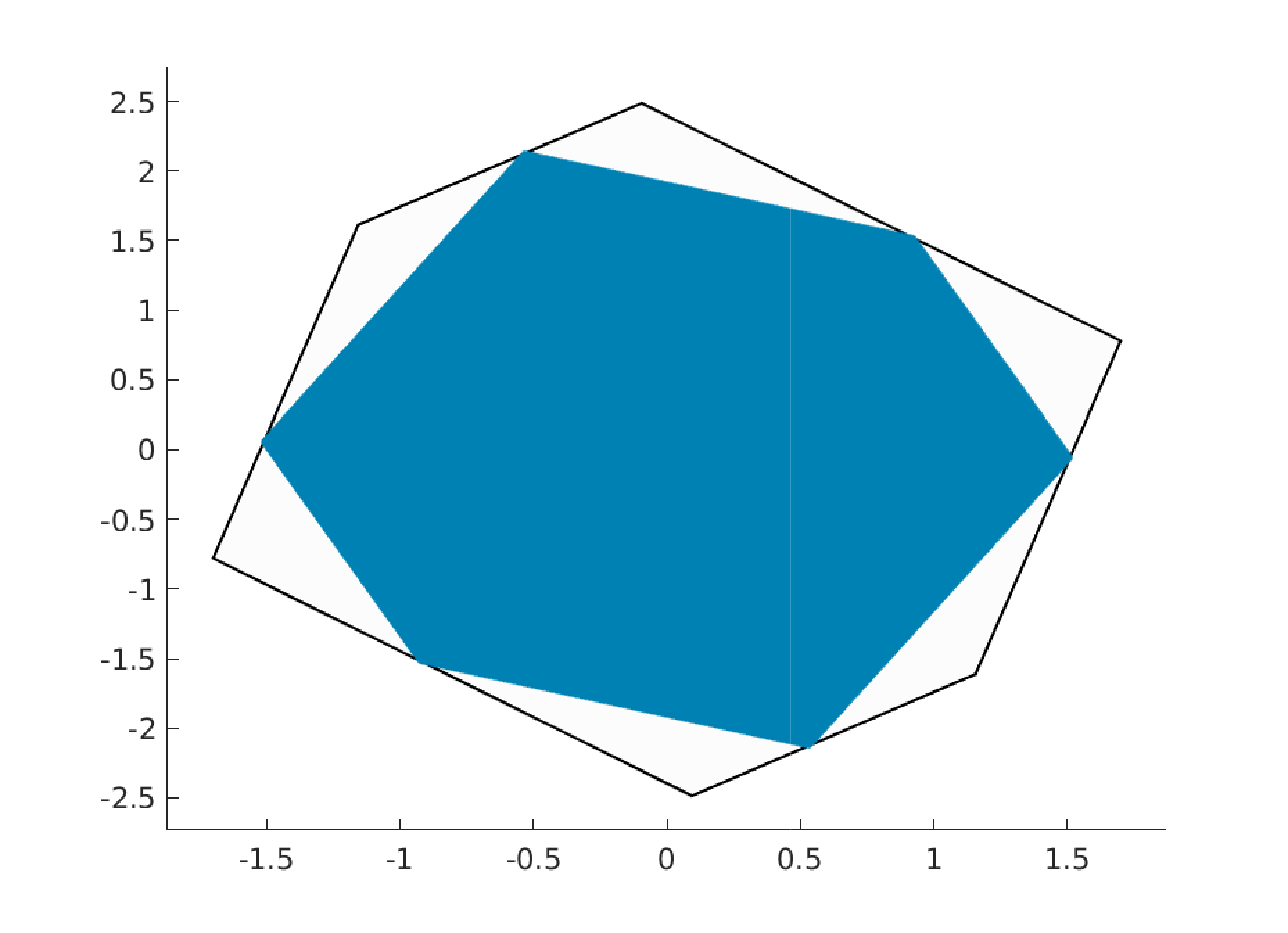}
    \captionof{figure}{Examples of the polytopes acquired by solving the directions provided by the adaptive polytope method for affine layers with $n_1 = 2, n_0 = 3$. The blue set is the exact image of the transformation whilst the bounding polytope is acquired by solving linear programs $\min_{Cx \leq d} c^\top (Wx + b)$ for $c$s taken as the rows of $CW^\dagger$.}
    \label{fig:sampleAdaptivePolytopes}
\end{figure}

\paragraph{Tall Weight Matrix ($n_1 > n_0$):} In this scenario, assuming $W$ has full rank, the system of equations $Wx + b = y$ has a unique solution given by $x = W^\dagger (y - b)$. 
We now have
\[
CW^\dagger(y - b) = Cx \leq d \Rightarrow CW^\dagger y \leq d + CW^\dagger b = d^\prime.
\]
As a result, we propose the same directions $CW^\dagger$ as before. Since $u_i^\top W =0$, $i=n_0+1,\cdots,n_1$, we have $u_i^\top y =u_i^\top b$. This shows that $y$ has zero variance in the direction of $u_i, i=n_0+1,\cdots,n_1$. At the same time, $CW^\dagger u_i = 0$. Thus, the directions in $CW^\dagger$ have no components in the space of $\mathrm{Span}\{ u_{n_0 + 1}, \cdots, u_{n_1}\}$. Therefore, we simply add $\pm u_i$ for $i = n_0+1, \cdots, n_1$ to the list of directions, i.e.,
$
\hat{C} = \begin{bmatrix}
(CW^\dagger)^\top,
u_{n_0 + 1},
-u_{n_0 + 1},
\hdots,
u_{n_1},
-u_{n_1}
\end{bmatrix}^\top
$.\\
An interesting feature in this scenario is that the polytope defined by $\hat{C} y \leq d^\prime$ is tight, i.e., all points $y$ that reside in this polytope have a corresponding $x = W^\dagger(y - b)$ that satisfies $Cx \leq d$. To see this, it suffices to realize that for all $y$ that $\hat{C}y \leq d^\prime$, by the construction of $\hat{C}$, we know that $y$ is in the range space of $W$ shifted by $b$. Thus there exists a corresponding $x = W^\dagger(y - b)$. Now, $Cx = CW^\dagger(y - b) \leq d^\prime - CW^\dagger b = d$.
Thus, the polytope is tight and the directions of $\hat{C}$ exactly correspond to the directions of the polytope.

\paragraph{$\mathrm{\mathbf{ReLU}}$ Layers} Consider a $\mathrm{ReLU}$ layer $y = \mathrm{ReLU}(x)$, $x\in \mathbb{R}^{n}$, and suppose $x \in \mathcal{X} = \mathrm{Poly}(C,d)$,   $C \in \mathbb{R}^{m \times n}$. Since $y \geq 0$, the $\mathrm{ReLU}$ activations might cut off the polytope. Thus we can improve the set of facets by adding the corresponding directions
$
    \tilde{C} = \begin{bmatrix}
    C^\top, 
    -e_1,
    \hdots,
    -e_n
    \end{bmatrix}^\top
$, where $e_i$ is the ith standard basis vector. 

Next, we will prove that given a polytope, this proposed method will return a polytope.
\begin{proposition}
Given a full-rank matrix $W$, and a polytope with facets given by $C$, the adaptive polytope method outlined in Algorithm  \ref{alg:adaptivePolytopeDirections} will yield a polytope, i.e., a bounded polyhedron.
\end{proposition}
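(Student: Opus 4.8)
The plan is to reduce boundedness to a condition on the direction matrix alone and then verify that this condition is preserved by each primitive operation of Algorithm~\ref{alg:adaptivePolytopeDirections}. Recall that a nonempty polyhedron $\mathrm{Poly}(A,d)$ is bounded if and only if its recession cone $\{v \mid Av \le 0\}$ is trivial, i.e. equal to $\{0\}$; crucially, this cone depends only on $A$ and not on the offset $d$ (the template sets here are nonempty, being over-approximations of images of the nonempty box $\mathcal{X}^0$). Hence I would phrase the whole argument in terms of the property $\mathcal{R}(A) := \{v \mid Av \le 0\} = \{0\}$, which is exactly ``$A$ defines a bounded polytope.'' The base case is immediate: the initial set $\mathcal{X}^0$ is a box, so its direction matrix $C^0 = [I_n;\,-I_n]$ satisfies $\mathcal{R}(C^0) = \{0\}$. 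Since the algorithm produces the output template by chaining the three primitive maps (fat-affine, tall-affine, $\mathrm{ReLU}$) across the layers, it suffices to show that each primitive sends a matrix with trivial recession cone to another such matrix; boundedness of the final template then follows by induction on the layers.

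For the affine steps the key structural fact is that a full-rank $W$ makes its pseudoinverse $W^\dagger$ injective on the relevant subspace. In the fat case $\hat C = CW^\dagger$, and $W^\dagger = W^\top(WW^\top)^{-1}$ is a right inverse, so $WW^\dagger = I_{n_1}$ forces $W^\dagger$ to be injective. Thus if $\hat C v \le 0$, setting $x = W^\dagger v$ gives $Cx \le 0$, whence $x = 0$ by $\mathcal{R}(C) = \{0\}$, and injectivity yields $v = 0$; this proves $\mathcal{R}(\hat C) = \{0\}$. In the tall case $\hat C$ additionally contains the rows $\pm u_i$ for $i = n_0+1,\dots,n_1$. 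Since $\{u_1,\dots,u_{n_1}\}$ is an orthonormal basis of $\mathbb{R}^{n_1}$ and $\mathrm{range}(W) = \mathrm{Span}\{u_1,\dots,u_{n_0}\}$, the constraints $\pm u_i^\top v \le 0$ force $u_i^\top v = 0$ for $i > n_0$, i.e. $v \in \mathrm{range}(W)$. Writing $v = Wx$ and using $W^\dagger W = I_{n_0}$, the remaining rows $CW^\dagger v \le 0$ read $Cx \le 0$, so $x = 0$ and $v = 0$; again $\mathcal{R}(\hat C) = \{0\}$.

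The $\mathrm{ReLU}$ step is the easiest: $\tilde C = [C^\top,\,-e_1,\dots,-e_n]^\top$ contains every row of $C$, so $\mathcal{R}(\tilde C) \subseteq \mathcal{R}(C) = \{0\}$ and boundedness is inherited directly (equivalently, adding facets only shrinks the set). Stitching the three cases together along the layer ordering in Algorithm~\ref{alg:adaptivePolytopeDirections} completes the induction and gives the claim. I expect the fat-affine case to be the only genuinely delicate point: there we deliberately discard the null-space directions by absorbing the free variable $\zeta$ into the offset, so it is a priori unclear that the surviving directions $CW^\dagger$ still pin down $y$ in every direction. The resolution is precisely the injectivity of the right inverse $W^\dagger$, which guarantees that no unbounded direction can survive even after the projection; this is the step I would state and justify most carefully, while the tall and $\mathrm{ReLU}$ cases reduce to bookkeeping once the recession-cone reformulation is in place.
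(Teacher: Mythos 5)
Your proof is correct and follows essentially the same route as the paper's: both arguments reduce boundedness to the absence of an unbounded direction and use full rank of $W$ to show that any such direction $v$ of the new template would force $W^\dagger v = 0$ (fat case) or, after the added $\pm u_i$ facets pin $v$ into $\mathrm{range}(W)$, force $v = Wx$ with $Cx \le 0$ (tall case), hence $v = 0$. Your recession-cone formulation is a cleaner formalization of the paper's ``$y + \alpha v$ tends to infinity'' contradiction --- it makes explicit the offset-independence that the paper relies on implicitly (since $d'$ is left unspecified) and fills in the tall case, which the paper dispatches with ``the same argument will apply'' --- but the underlying mechanism is identical.
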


\begin{proof}
As the $\mathrm{ReLU}$ layers only add a set of facets, they do not affect the bounded property of the polytope as it is already bounded. Thus, we will show that the affine layers preserve the bounded property. Suppose we have $y = Wx + b$ and $x \in \mathrm{Poly}(C,d)$ is bounded. We consider two cases:

\begin{enumerate}[leftmargin=*]
    \item $n_1 \leq n_0$: If $CW^\dagger y \leq d^\prime$ is unbounded, there exists a point $y$ and direction $v$ such that $y(\alpha) = y + \alpha v$ tends to infinity as $\alpha$ grows larger. The corresponding point in the $x$ space is given by $x(\alpha) = W^\dagger (y + \alpha v - b) + F\zeta = \alpha W^\dagger v + W^\dagger (y - b) + F\zeta$. Since the input is bounded in a polytope, we must have $W^\dagger v = 0$. The null space of $W^\dagger$ is given by $\mathrm{Span}\{u_{r + 1}, \cdots, u_{n_1}\}$ where $r$ is the rank of $W$. But this can not happen since $W$ has full rank $r = n_1$.
    \item $n_1 > n_0$: Similarly, since we add the directions not spanned by $CW^\dagger$ to the set of facets, the same argument will apply and the suggested directions define a bounded polytope.
\end{enumerate}
\vspace{-8.35mm}
\end{proof}





\SetKwComment{Comment}{/* }{ */}

\begin{algorithm}[b!]

\caption{Adaptive Polytope Directions}\label{alg:adaptivePolytopeDirections}
\KwData{Input polytope template matrix $C$, piecewise linear sequential neural network weights $( W^0, \cdots, W^L)$, similarity tolerance level $\lambda$.}
\KwResult{Output polytope template matrix $D$.}
$D \gets C$\;

\For{$i=0, \cdots, L$}{
$n_1, n_0$ = shape($W^i$) \;

$U, \Sigma, V = \text{SVD}(W^i)$, \hfill  \Comment{$U = \lbrack u_1, \cdots, u_{n_1}\rbrack$}

$D = D (W^i)^\dagger$\;

\If{$n_1 > n_0$}{
$ D = \lbrack D^\top, u_{n_0 + 1}, -u_{n_0 + 1}, \cdots, u_{n_1}, -u_{n_1}\rbrack^\top $ 
}

$D = \lbrack D^\top, -e_1, \cdots, -e_{n_1}\rbrack^\top$, \hfill \Comment{\tiny$e_i \in \mathbb{R}^{n_1}$. After each layer, there is a $\mathrm{ReLU}$ nonlinearity.}
}
Remove similar directions from $D$  according to cosine similarity at tolerance level $\lambda$.

\textbf{Return} $D$
\end{algorithm}

In summary, the method proposed above allows us to approximate the directions of the actual output set. Figure \ref{fig:sampleAdaptivePolytopes} portrays sample results of this algorithm on random single linear layers.\\
The output directions $\hat{C} $ of adaptive polytopes on affine layers will have dimensions $(m + (n_1 - n_0)_+) \times n_1$. Having a consequent $\mathrm{ReLU}$ layer will add $n_1$ rows to $\hat{C}$. For an $L$ layer neural network this would yield $m + \sum_{i = 1}^{L - 1} n_i + \sum_{i = 1}^L (n_i - n_{i - 1})_+$  facets.
However, it is very likely that linear layers with $n_1 \leq n_0$ may produce directions that are similar.  To avoid an explosion of the number of facets, we simply remove similar directions. That is, after the algorithm is run for the full neural network, if for two directions $c_1$ and $c_2$ the cosine similarity is larger  than a certain threshold, i.e., $cos(\theta) = \frac{c_1^\top c_2}{||c_1||_2 ||c_2||_2} > \lambda$, we discard one of the directions.
\begin{remark}
\normalfont
Although it is possible to employ Algorithm \ref{alg:adaptivePolytopeDirections} on the equivalent neural network defined by \eqref{eqn:modifiedNn}, we find, empirically, that it is better to use the algorithm to discern the set of directions $\hat C_1$ and $\hat C_2$ for $A\bar{\mathcal{X}}^k$ and $Bf(\bar{\mathcal{X}}^k)$, respectively, and then use $\hat C^\top = \begin{bmatrix}
\hat C_1^\top & 
\hat C_2^\top
\end{bmatrix}$ as the final set of directions for $x^{k + 1} = Ax^k + Bf(x^k) + e$. To ensure the boundedness of the polytope, we require that $A$ or $B$ be full rank. If not, one can simply add the directions corresponding to axis-aligned hyperrectangles.
\end{remark}


\begin{figure}[t!]
    \centering
    \includegraphics[width=0.8\textwidth]{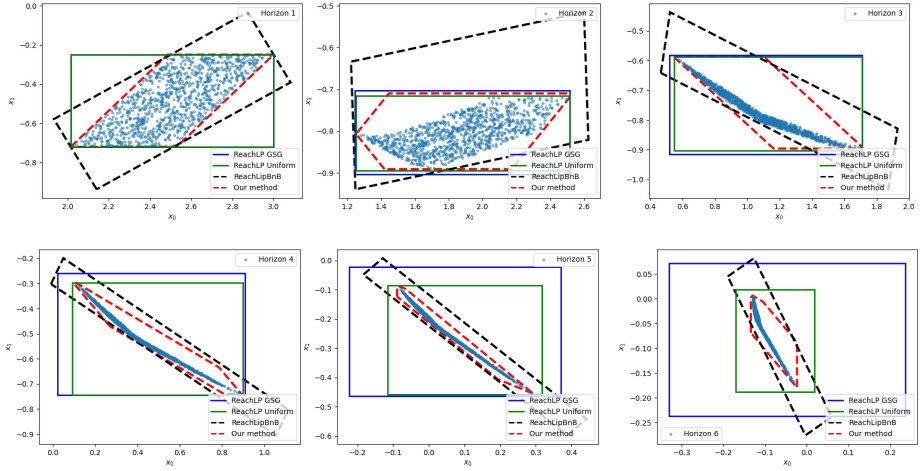}
    \caption{Computation of reachable sets for the double integrator system.}
    \label{fig:doubleIntegratorReach}
\end{figure}

\section{Experiments}
In this section, we present the results of our algorithm on two tasks and compare our results with the BnB method of \cite{entesari2022reachlipbnb} and the ReachLP method of \cite{everett2021reachability}.
The experiments are conducted on an Intel Xeon W-2245 3.9 GHz processor with 64 GB of RAM. For specifications of the BnB algorithm, see \cite{entesari2022reachlipbnb}.
For each work, we use the template polytope directions that the original work has proposed. For \cite{everett2021reachability}, we use their CROWN propagator and test their results with both the uniform and the greedy simulation guidance (GSG) partitioners. We use a $\lambda = 0.98$ for the cosine similarity threshold.

\subsection{Double Integrator}
We first consider the two-dimensional discrete-time system from \cite{hu2020reach}, for which we have $A = 
    \begin{bmatrix}
    1 & 1\\
    0 & 1
    \end{bmatrix}$,
    $B = 
    \begin{bmatrix}
    0.5\\
    1
    \end{bmatrix}$, and $e = 0$ in \eqref{eqn:linearSystem}. 
    We use a trained neural network that approximates the MPC controller. The neural network has the following number of neurons in each layer: (2-10-5-1) (starting from the input layer).
    The comparison is presented in Fig. \ref{fig:doubleIntegratorReach}. For this example, we use the initial set given by $\mathcal{X}^0 = \lbrack 2.5, 3\rbrack \times \lbrack -0.25, 0.25 \rbrack$. 
    As the figure suggests, our method provides considerably less conservative over-approximations, especially for longer time horizons. 
    Entry DI in Table  \ref{tab:runTimeResults} represents the results of the experiment. For this system, with our trained network, our method solves a total of 74 optimization problems, whereas the other methods solve 4 problems per time step. We used an epsilon solve accuracy of $0.01$ for ReachLipBnB.


\subsection{6D Quadrotor}
We consider a quadrotor example with 6 state variables. This system is also taken from \cite{hu2020reach}. For this system, we have
$A = I_{6\times6} + \!\Delta t \times \begin{bmatrix}
0_{3\times3} & I_{3\times3} \\
0_{3\times3} & 0_{3\times3}
\end{bmatrix}$, 
$B = \!\Delta t \!\times\begin{bmatrix}
& g & 0 & 0 \\
0_{3\times3} & 0 & -g & 0 \\
& 0 & 0 & 1
\end{bmatrix}^\top$, 
$u = \begin{bmatrix}
\tan(\theta) &
\tan(\phi)&
\tau
\end{bmatrix}^\top$ and 
$e^\top = \!\Delta t \!\times\begin{bmatrix}
0_{5\times1} &
-g^\top
\end{bmatrix}$ in \eqref{eqn:linearSystem}.
We train a neural network to approximate an MPC controller. The neural network has 3 linear layers with 32, 32, and 3 neurons, respectively. As the original system is continuous time, we have discretized it with a sampling time of $\!\Delta t = 0.1$ seconds. The reachability analysis is conducted for 12 time steps.
For this example, we let the initial set be defined by $ [4.69, 4.71]\times [4.65, 4.75] \times [2.975, 3.025]\times [0.9499, 0.9501]\times [-0.0001, 0.0001]\times [-0.0001, 0.0001]$. 
For this example, Fig. \ref{fig:quadReach} shows that even after 12 time steps, our method provides tight reachable sets.
Entry QR in Table  \ref{tab:runTimeResults} represents the results of the experiment. Our method solves a total of 426 optimization problems, whereas ReachLP and ReachLipBnB solve 12 and 16 problems per time step, respectively. The uniform partitioner for ReachLP timed out after 27 hours. We used an epsilon solve accuracy of $0.001$ for ReachLipBnB.


\begin{figure}[t!]
    \begin{minipage}{.5\textwidth}
    \includegraphics[width=0.9\textwidth]{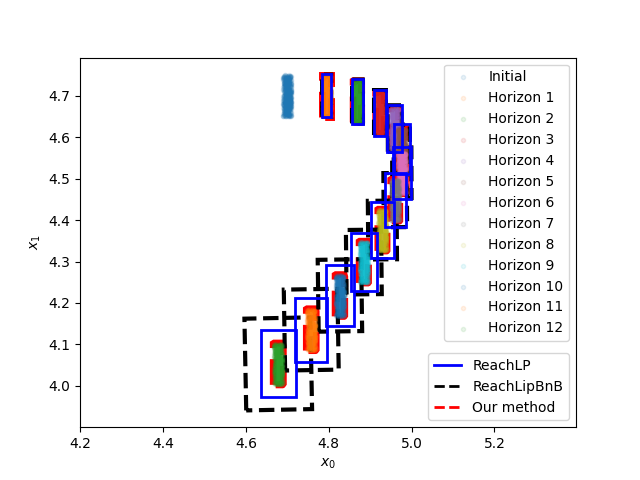}
    \captionof{figure}{Reachable sets of the first two state variables of the quadrotor system. The reachable sets of our method are tight in this experimental setup.}
     \label{fig:quadReach}
\end{minipage}
\begin{minipage}{.46\textwidth}
    
\centering
\resizebox{\textwidth}{!}{
\begin{tabular}{|l|l|c|c|c|c|}
\hline                                 &                       & \multicolumn{1}{l}{ReachLipBnB} & \multicolumn{1}{l}{ReachLP GSG} & \multicolumn{1}{l}{ReachLP uniform} & \multicolumn{1}{l}{Our method}\vline\\
\hline
\multirow{2}{*}{DI} & Run time [s]          & 2.45                             & 1.078                            & 5.95                                & 9.96                            \\
                                             & RT/ND & 0.102                            & 0.045                           & 0.248                              & 0.134                           \\
\hline
\multirow{2}{*}{\rotcell{QR}}         & Runt time [s]         & 2170                            & 18.8                            & -                                   & 213.67                             \\
                                             & RT/ND & 11.3                            & 0.13                            & -                                   & 0.501                          \\
                                             \hline
\end{tabular}
}
\captionof{table}{\small Run time statistics for experiments. RT/ND is the total run time of the experiment divided by the total number of optimization problems (or directions) solved. 
}

\label{tab:runTimeResults}
\end{minipage}
\end{figure}
\section{Conclusion}
We presented a novel method for reachability analysis of affine systems in feedback with $\mathrm{ReLU}$ neural network controllers using dynamic template polytopes. We then computed tight polyhedral over-approximation of the reachable sets using a BnB method based on partitioning the set of initial states and unrolling the dynamics. 
The bounding mechanism of DeepPoly provides fast estimates that scale up to large and deep neural networks; however, the application of our framework for higher dimensional input sets might be restrictive due to the worst-case exponential growth in partitioning the input set. To counteract this, we will explore less conservative bound propagation methods such as \cite{fatnassi2023bern},
combined with branching in the activation space (as opposed to the state space). Furthermore, we will explore operator splitting methods similar to \cite{chen2022deepsplit} to further improve scalability with respect to long time horizons.

\clearpage
\bibliography{bib.bib}

\end{document}